\newtheorem{theorem}{Theorem}
\newtheorem{lemma}{Lemma}
\newtheorem{proposition}{Proposition}
\newcommand{\Setx}[1]{\left\{#1\right\}}
\newcommand{\dist}[2]{d(#1,#2)}
\newcommand{\norm}[1]{\|#1\|}
\newcommand{\RR}{\mathbb{R}}
\newcommand{\CD}[2][]{\ensuremath{\textup{\textsf{C-DISC}}_{#1}(#2)}}
\newcommand{\CGG}[1]{\ensuremath{\textup{\textsf{GG}}(#1)}}
\newcommand{\kGG}[1]{\ensuremath{\textup{\textsf{$k$-GG}}(#1)}}
\newcommand{\IGG}[1]{\ensuremath{\textup{\textsf{$1$-GG}}(#1)}}
\newcommand{\XGG}[1]{\ensuremath{\textup{\textsf{$10$-GG}}(#1)}}
\newcommand{\VGG}[1]{\ensuremath{\textup{\textsf{$5$-GG}}(#1)}}
\newcommand{\VIIGG}[1]{\ensuremath{\textup{\textsf{$7$-GG}}(#1)}}
\newcommand{\VIIIGG}[1]{\ensuremath{\textup{\textsf{$8$-GG}}(#1)}}
\newcommand{\XIXGG}[1]{\ensuremath{\textup{\textsf{$19$-GG}}(#1)}}
\newcommand{\XVGG}[1]{\ensuremath{\textup{\textsf{$15$-GG}}(#1)}}
\newcommand{\kDG}[1]{\ensuremath{\textup{\textsf{$k$-DG}}(#1)}}
\newcommand{\kIDG}[1]{\ensuremath{\textup{\textsf{$(k+1)$-DG}}(#1)}}
\newcommand{\DG}[1]{\ensuremath{\textup{\textsf{DG}}(#1)}}
\newcommand{\ODG}[1]{\ensuremath{\textup{\textsf{$0$-DG}}(#1)}}
\newcommand{\IDG}[1]{\ensuremath{\textup{\textsf{$1$-DG}}(#1)}}
\newcommand{\XIXDG}[1]{\ensuremath{\textup{\textsf{$19$-DG}}(#1)}}
\newcommand{\XVDG}[1]{\ensuremath{\textup{\textsf{$15$-DG}}(#1)}}
\newcommand{\RNG}[1]{\ensuremath{\textup{\textsf{RNG}}(#1)}}
\newcommand{\kRNG}[1]{\ensuremath{\textup{\textsf{$k$-RNG}}(#1)}}
\newcommand{\kImRNG}[1]{\ensuremath{\textup{\textsf{$(k-1)$-RNG}}(#1)}}
\newcommand{\XIXRNG}[1]{\ensuremath{\textup{\textsf{$19$-RNG}}(#1)}}
\journal{arXiv}
\begin{document}

\begin{frontmatter}



\title{10-Gabriel graphs are Hamiltonian}


\author[ZCU1]{Tom\'a\v{s} Kaiser\fnref{fn1}}
\ead{kaisert@kma.zcu.cz}
\author[ZCU2]{Maria Saumell\corref{cor}\fnref{fn2}}
\ead{saumell@kma.zcu.cz}
\author[ZCU2]{Nico Van Cleemput\fnref{fn2}}
\ead{cleemput@kma.zcu.cz}

\cortext[cor]{Corresponding author.}
\fntext[fn1]{Supported by project 14-19503S of the
Czech Science Foundation.}
\fntext[fn2]{Supported by the project NEXLIZ – CZ.1.07/2.3.00/30.0038, which is co-financed by the European Social Fund and the state budget of the Czech Republic.}

\address[ZCU1]{Department of Mathematics, Institute for Theoretical Computer Science (CE-ITI), and European Centre of Excellence NTIS (New Technologies for the Information Society), University of West Bohemia, Pilsen, Czech Republic}

\address[ZCU2]{Department of Mathematics and European Centre of Excellence NTIS (New Technologies for the Information Society), University of West Bohemia, Pilsen, Czech Republic}

\begin{abstract}
Given a set $S$ of points in the plane, the \emph{$k$-Gabriel graph} of $S$ is the geometric graph with vertex set $S$, where $p_i,p_j\in S$ are connected by an edge if and only if the closed disk having segment $\overline{p_ip_j}$ as diameter contains at most $k$ points of  $S \setminus \{p_i,p_j\}$. We consider the following question: What is the minimum value of $k$ such that the $k$-Gabriel graph of every point set $S$ contains a Hamiltonian cycle? For this value, we give an upper bound of 10 and a lower bound of 2. The best previously known values were 15 and 1, respectively.
\end{abstract}

\begin{keyword}
Computational geometry \sep Proximity graphs \sep Gabriel graphs \sep Hamiltonian cycles 



\end{keyword}

\end{frontmatter}


\section{Introduction}
\label{sec:intro}

Let $S$ be a set of $n$ distinct points in the plane. Loosely speaking, a \emph{proximity graph} on $S$ is a graph that attempts to capture the relations of proximity among the points in $S$. Usually, one defines a reasonable criteria for two points to be considered close to each other, and then the pairs of points that satisfy the criteria are connected in the graph. The study of proximity graphs has been a popular topic in computational geometry, since these graphs not only satisfy interesting theoretical properties, but also have applications in several fields, such as shape analysis, geographic information systems, data mining, computer graphics, or graph drawing (see, for example,~\cite{bookVDDT,Toussaint-surv}).
  
The Delaunay graph and its relatives constitute a prominent family of proximity graphs. In the \emph{Delaunay graph} of $S$, denoted by $\DG{S}$, $p_i,p_j\in S$ are connected by an edge if and only if there exists a closed disk with $p_i,p_j$ on its boundary that does not contain any point of $S \setminus \{p_i,p_j\}$ (see~\cite{Delaunay}). It is well-known that if $S$ does not contain three collinear or four cocircular points, then $\DG{S}$ is a triangulation of $S$.

Two related proximity graphs are the relative neighborhood graph and the Gabriel graph. In the \emph{relative neighborhood graph} of $S$, denoted by $\RNG{S}$, $p_i,p_j\in S$ are connected by an edge if and only if there does not exist any $p_{\ell}\in S$ such that $\dist {p_i} {p_{\ell}} < \dist {p_i} {p_j}$ and $\dist {p_j} {p_{\ell}} < \dist {p_i} {p_j}$, where $\dist p q$
denotes the Euclidean distance between $p$ and $q$ (see~\cite{Touss-RNG}). 

Given two points $p_i,p_j\in S$, we denote the
closed disk having segment $\overline{p_ip_j}$ as diameter by $
\CD{p_i,p_j}$. The \emph{Gabriel graph} of $S$ is
the graph in which $p_i,p_j\in S$ are connected by an edge if and only if $\CD{p_i,p_j}\cap
S=\{p_i,p_j\}$ (see~\cite{GabSok:gabgraph}). We denote the Gabriel graph of $S$ by $\CGG{S}$. Notice that $\RNG{S}\subseteq \CGG{S} \subseteq \DG{S}$ holds for any point set $S$.

All of the above graphs are plane, that is, if edges are drawn as line segments, then the resulting drawing contains no crossings. In the last decades, a number of works have been devoted to investigate whether they fulfill other desirable graph-theoretic, geometric, or computational properties. For example, it has been studied whether the vertices of these graphs have bounded maximum or expected degree~\cite{MatSok,Devr-maxdegGG,table-paper}, whether these graphs are constant spanners~\cite{SRGG,Dobkin-DG}, or whether they are compatible with simple online routing algorithms~\cite{CompRout-Urrutia}.

A problem that attracted much attention is the Hamiltonicity of Delaunay graphs: Does $\DG{S}$ contain a Hamiltonian cycle for every point set $S$? Dillencourt~\cite{nonHamDT} answered this question negatively by providing an example of a set of points whose Delaunay graph is a non-Hamiltonian triangulation. This naturally raises the question whether there exist variants of the Delaunay graph that do always contain a Hamiltonian cycle.

This problem has been studied for the Delaunay graph in the $L_{\infty}$ metric. This graph contains an edge between $p_i,p_j\in S$ if and only if there exists an axis-aligned square containing $p_i,p_j$ and no other point in $S$. Even though Delaunay graphs in the $L_{\infty}$ metric need not contain a Hamiltonian cycle, they satisfy the slightly weaker property of containing a Hamiltonian path, as shown by {\'A}brego \textit{et al.}~\cite{MatchPtSq}.

Another natural variant of Delaunay graphs which has received some interest is that of \emph{$k$-Delaunay graphs}, $\kDG{S}$ for short~\cite{kDG-Abe}. In this case, the definition is relaxed in the following way: $p_i,p_j\in S$ are connected by an edge if and only if there exists a closed disk with $p_i,p_j$ on its boundary that contains at most $k$ points of $S \setminus \{p_i,p_j\}$. Analogous generalizations lead to \emph{$k$-Gabriel graphs} and \emph{$k$-relative neighborhood graphs}. The $k$-Gabriel graph of $S$, denoted by $\kGG{S}$, is the graph
in which $p_i,p_j$ are connected by an edge if and only if $|\CD{p_i,p_j}\cap S|\leq k+2$ (see~\cite{kGG-Su}). The $k$-relative neighborhood graph of $S$, denoted by $\kRNG{S}$, is the graph
in which $p_i,p_j$ are connected by an edge if and only if there exist at most $k$ points $p_{\ell}\in S$ such that $\dist {p_i} {p_{\ell}} < \dist {p_i} {p_j}$ and $\dist {p_j} {p_{\ell}} < \dist {p_i} {p_j}$ (see~\cite{CTL:kRNG}).

Notice that $\ODG{S}=\DG{S}$ and, for any $k\geq 0$, $\kDG{S} \subseteq \kIDG{S}$. Since $\kDG{S}$ is the complete graph for $k\geq n/2$~\cite{kDG-Abe} and the complete graph is Hamiltonian, the following question arises: What is the minimum value of $k$ such that 
$\kDG{S}$ is Hamiltonian for every $S$? Abellanas \textit{et al.}~\cite{kDG-Abe} conjectured that this value is 1, that is, $\IDG{S}$ is already Hamiltonian. The same question can be formulated for $\kGG{S}$ and $\kRNG{S}$.

The first upper bound for such minimum value of $k$ was given by Chang \textit{et al.}~\cite{20RNGHam}, who proved that $\XIXRNG{S}$ is always Hamiltonian\footnote{Chang \textit{et al.}~\cite{20RNGHam} define $\kRNG{S}$ in a slightly different way, so $\kRNG{S}$ in their paper is equivalent to $\kImRNG{S}$ in our paper.}. Since, for any $k\geq 0$, $\kRNG{S}\subseteq \kGG{S} \subseteq \kDG{S}$, the result implies that $\XIXGG{S}$ and $\XIXDG{S}$ are also Hamiltonian. Later, Abellanas \textit{et al.}~\cite{kDG-Abe} improved the bound for the latter graphs by showing that $\XVGG{S}$ (and thus $\XVDG{S}$) is already Hamiltonian\footnote{There also exists an unpublished upper bound of 13~\cite{perouz}.}. In this short paper we improve their bound as follows:

\begin{theorem}\label{main-theorem}
For any set of points $S$, the graph $\XGG{S}$ is Hamiltonian.
\end{theorem}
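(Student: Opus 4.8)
The strategy is to pass through an intermediate triangulation-like object and to exploit Hamiltonicity criteria for plane graphs. Concretely, the plan is to show that a suitable spanning subgraph of $\XGG{S}$ contains, for every point set $S$, a so-called Tutte path or a $2$-connected planar structure to which a classical sufficient condition for Hamiltonicity applies. First I would recall that $\DG{S}$ is a triangulation when $S$ is in general position, and reduce to that case by a standard perturbation argument: collinear or cocircular degeneracies only add edges to the $k$-Gabriel graph, so a Hamiltonian cycle for a perturbed set, whose edges are ``short'' in a quantitative sense, will survive as edges of $\XGG{S}$ for the original $S$ provided the perturbation is small enough. Thus from now on $\DG{S}$ may be assumed to be a triangulation $T$.

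Next I would build a $4$-connected planar graph out of $T$ so that Tutte's theorem (every $4$-connected planar graph is Hamiltonian) can be invoked. The key point — and this is where the constant $10$ enters — is to show that whenever $T$ has a separating triangle or a low-connectivity cut, one can ``charge'' the obstruction to a bounded number of points lying inside a diametral disk, and then add the corresponding $k$-Gabriel edges (edges whose diametral disk contains at most $k$ points) to repair connectivity without leaving the graph $\XGG{S}$. More precisely, for each edge $e=\overline{p_ip_j}$ of $T$ that we wish to ``reinforce,'' the two Delaunay triangles on either side of $e$ have apex points close to $e$; I would argue that a short chord between these apices, or between an apex and an endpoint of $e$, has a diametral disk contained in the union of the two circumscribed disks of those triangles, and that such a disk cannot contain more than a constant number of points of $S$ because of the empty-circle property of $T$. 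Carrying this out carefully, the constant works out to at most $10$ — that is, all the reinforcing edges needed belong to $\XGG{S}$.

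Finally, with the reinforced graph $G\subseteq \XGG{S}$ shown to be $4$-connected and planar, Tutte's theorem yields a Hamiltonian cycle of $G$, hence of $\XGG{S}$, completing the proof. The main obstacle I anticipate is the second step: controlling exactly how many points of $S$ can lie in the diametral disks of the reinforcing edges. The empty-circle property of a Delaunay triangle bounds points in its circumdisk, but the diametral disk of a chord joining apices of two adjacent triangles can poke outside both circumdisks, so one needs a geometric lemma bounding the excess region and a counting argument (likely via angles, or via the fact that points in that region would themselves be forced into nearby empty circles) to pin down the constant. Getting this constant down to $10$ rather than, say, $15$ is presumably the technical heart of the paper, and I would expect it to require a careful case analysis on the local structure of $T$ around each reinforced edge, together with the observation that several reinforcing edges can share the blame for the same nearby points, so the bound need not be additive over edges.
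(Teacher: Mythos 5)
There is a genuine gap, and it sits exactly where you yourself flag the ``technical heart'': the reinforcement step is not just unproven but structurally blocked. If $S$ is in general position, $\DG{S}$ is a triangulation, i.e.\ an edge-maximal planar graph, so \emph{no} edge can be added to it while preserving planarity; your ``reinforced graph $G\subseteq \XGG{S}$'' cannot simultaneously contain the triangulation, be planar, and be a proper supergraph obtained by adding chords. (Note that $\XGG{S}$ itself is not plane in general for $k\geq 1$, so one cannot appeal to planarity of the ambient graph either.) Even setting that aside, Dillencourt's non-Hamiltonian Delaunay triangulations show that separating triangles can be nested and numerous, and you give no mechanism by which a bounded amount of local surgery removes all of them; the claim that ``the constant works out to at most $10$'' is asserted rather than derived, and nothing in the sketch ties the number of points in a diametral disk of a reinforcing chord to the specific value $10$. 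The opening perturbation step is also backwards for this theorem: degeneracies can only \emph{add} edges to $\XGG{S}$, so one would want to perturb in the other direction, but this is a minor issue compared to the above.

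For comparison, the paper's proof does not go through planarity or Tutte's theorem at all. It takes the Hamiltonian cycle $m$ of the complete graph on $S$ whose sorted edge-length sequence is lexicographically minimal (a bottleneck Hamiltonian cycle), and shows every edge $xy$ of $m$ lies in $\XGG{S}$. Exchange arguments (replacing $xy$ and one or two other edges of $m$ by shorter ones) yield lower bounds of the form $\dist{s_i}{x}\geq\max\{\dist{s_i}{u_i},2\}$ and $\dist{s_i}{s_j}\geq 2$ for the predecessors $s_i$ of the points $u_i\in \CD{x,y}$; after projecting far-away $s_i$ onto the circle of radius $3$ about the midpoint of $xy$, these inequalities show that the unit disks around $x$ and around the (projected) $s_i$ are pairwise internally disjoint and all fit inside a disk of radius $4$. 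Fodor's theorem that twelve unit disks require a container of radius greater than $4.029$ then caps the count at eleven disks, i.e.\ at most $10$ points $u_i$, which is precisely where the constant comes from. If you want to salvage your own route, you would need an entirely different way to produce a $4$-connected planar spanning subgraph inside $\XGG{S}$; as written, the plan does not close.
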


We note that related properties of $k$-Gabriel graphs have been recently considered by Biniaz \textit{et al.}~\cite{BiniazMS14a}. In particular, the authors show that $\XGG{S}$ always contains a Euclidean bottleneck perfect matching, that is, a perfect matching that minimizes the length of the longest edge. Our proof of Theorem 1 actually shows that 10-GG(S) always contains a Euclidean bottleneck Hamiltonian cycle, which is a Hamiltonian cycle minimizing the length of the longest edge. Even though the two results are closely related, there is no direct implication between them.

We prove Theorem~\ref{main-theorem} in Section~\ref{sec:main-result}. Our proof uses the same general strategy as the ones in~\cite{kDG-Abe,20RNGHam}: We select a particular Hamiltonian cycle of the complete graph on $S$, and we find a value of $k$ such that $\kDG{S}$ contains this Hamiltonian cycle. In Section~\ref{sec:remarks}, we show that the best result that can possibly be proved with this particular approach is the Hamiltonicity of 6-Gabriel graphs (we also indicate the existence of an unpublished example~\cite{example-Biniaz} showing that the method cannot go beyond \ensuremath{\textup{\textsf{$8$-GG}}}). We further point out that it might be possible to decrease the value 10 by using a quadratic solver. Finally, we provide an example showing that 1-Gabriel graphs are not always Hamiltonian.

\section{Proof of Theorem~\ref{main-theorem}}
\label{sec:main-result}

The first steps of our proof go along the same lines as the arguments in~\cite{kDG-Abe} showing that 15-Gabriel graphs are Hamiltonian. The same general strategy was first used in~\cite{20RNGHam}. We provide the details for completeness.

We denote by $\mathcal{H}$ the set of all Hamiltonian cycles of the complete graph on $S$. Given a cycle $h\in \mathcal{H}$, we define the \emph{distance sequence} of $h$, denoted $ds(h)$, as the sequence containing the lengths of the edges of $h$ sorted in decreasing order (the \emph{length} of an edge is the length of the straight-line segment connecting its endpoints). Then, we define a strict order on the elements of $\mathcal{H}$ as follows: for $h_1,h_2\in \mathcal{H}$, we say that $h_1\succ h_2$ if and only if $ds(h_1)>ds(h_2)$ in the lexicographical order. 

Let $m$ be a minimal element of $\mathcal{H}$ with respect to the order that we have just defined. 
In the remainder of this section we show that all edges of $m$ belong to $\XGG{S}$, which in particular implies that $\XGG{S}$ is Hamiltonian.

Let $e=xy$ be any edge of $m$. We are going to show that $e$ is in $\XGG{S}$. Without loss of generality, we suppose that $x=(-1,0)$ and $y=(1,0)$. For any point $p$ in $\RR^2$, we write $\norm p$ for
the distance of $p$ from the origin $o=(0,0)$.

Let $U=\left\{u_1,u_2,\ldots,u_{\kappa} \right\}$ be the set of points in $S$ different from $x,y$ that are contained in $\CD{x,y}$. We want to prove that $\kappa \leq 10$. Suppose that, if we traverse the entire cycle $m$ starting from the ``directed" edge $\overrightarrow{xy}$ and finishing at $x$, we encounter the vertices of $U$ in the order $u_1,u_2,\ldots,u_{\kappa}$. For each point $u_i$, we denote by $s_i$ the point in $S$ preceding $u_i$ in this traversal of $m$ (see Figure~\ref{fig:cycle-m}). Note that possibly $s_1=y$.

\begin{figure}[htb]
    \centering
    \includegraphics[scale=0.7]{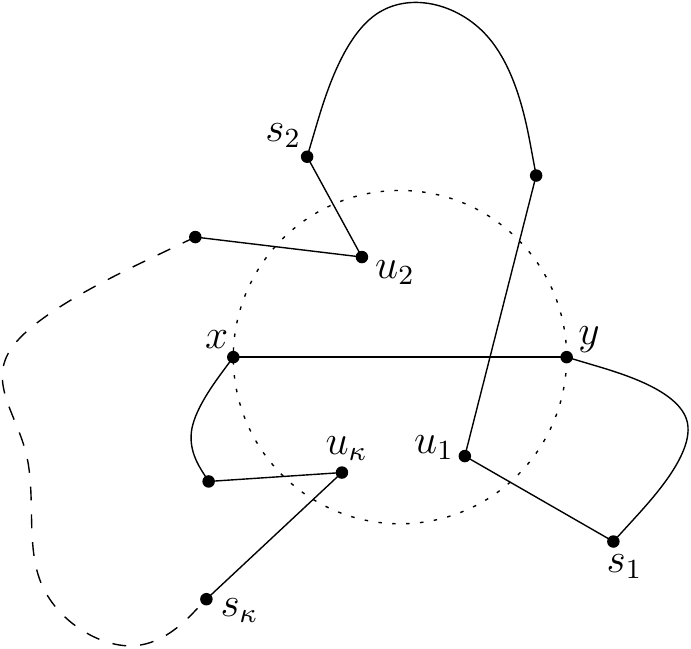}
    \caption{Cycle $m$ and the points $u_i$ and $s_i$.}
    \label{fig:cycle-m}
\end{figure}

We first prove that the following inequality holds, for $1\leq i\leq \kappa$:
\begin{equation}
  \dist{s_i}x \geq  \max\left\{\dist{s_i}{u_i},2\right\}.\label{eq:sx}
\end{equation}

(We stress that the maximum on the right hand side is \emph{not} taken over varying values of~$i$.)

If $s_1=y$, then $\dist{s_1}x=2$ and $\dist{s_1}{u_1}< 2$, so the inequality is satisfied. Otherwise, consider the Hamiltonian cycle $m'$ obtained by removing edges $s_iu_i$ and $xy$ from $m$, and adding edges $s_ix$ and $u_iy$. Note that, since $u_i$ lies in $\CD{x,y}$, we have that $\dist{u_i}{y}<\dist{x}{y}=2$. If $\dist{s_i}x < \max\left\{\dist{s_i}{u_i},2\right\}$, then it implies that $\max \left\{\dist{s_i}{x},\dist{u_i}{y}\right\}<\max\left\{\dist{s_i}{u_i},\dist{x}{y} \right\}$. Thus we would obtain that $m\succ m'$, contradicting the minimality of $m$. Hence we conclude that $\dist{s_i}x \geq \max\left\{\dist{s_i}{u_i},2\right\}$.

We observe that inequality~(\ref{eq:sx}) implies that, except for the case when $s_1=y$, the points $s_i$ are outside $\CD{x,y}$, as depicted in Figure~\ref{fig:cycle-m}. In particular, it is not possible that $u_i=s_{i+1}$ for any $i$.

Next, let $1\leq i<j\leq \kappa$. We show that the following inequality holds:
\begin{equation}
  \dist{s_i}{s_j} \geq \max\left\{\dist{s_i}{u_i},\dist{s_j}{u_j},2\right\}.\label{eq:ss}
\end{equation}

Suppose, for the sake of contradiction, that $\dist{s_i}{s_j} < \max\left\{\dist{s_i}{u_i},\dist{s_j}{u_j},2 \right\}$. We consider the Hamiltonian cycle $m''$ obtained by removing edges $s_iu_i$, $s_ju_j$ and $xy$ from $m$, and adding edges $s_is_j$, $u_ix$ and $u_jy$. As in the previous case, we have that $\dist{u_i}{x}<2$ and $\dist{u_j}{y}<2$. We obtain that $\max \left\{\dist{s_i}{s_j}, \dist{u_i}{x},\dist{u_j}{y}\right\}<\max\left\{\dist{s_i}{u_i},\dist{s_j}{u_j},\dist{x}{y} \right\}$. Thus $m \succ m''$, which contradicts the minimality of $m$.

Abellanas \textit{et al.}~\cite{kDG-Abe} use inequalities~(\ref{eq:sx}) and~(\ref{eq:ss}), together with some other geometric observations, to derive the bound $\kappa \leq 15$. Essentially, their argument consists of dividing the plane into several regions, and proving that each region contains at most one point of type $s_i$. We now present a packing-based argument that allows to reduce the upper bound to 10.

For a point $x$ in $\RR^2$ and a positive $r$, let $D(x,r)$ be the closed
disk with center $x$ and radius $r$. Additionally, we denote the boundary of this disk by $\partial D(x,r)$; in other words, $\partial D(x,r)$ is the circle of radius $r$ centered at
$x$.

For $i=1,\dots,\kappa$,
we define $s'_i$ as the intersection point between $\partial D(o,3)$ and the ray with
  origin at $o$ and passing through $s_i$ (i.e., $s'_i$ is the \emph{projection} of $s_i$ to $\partial D(o,3)$). If $\norm{s_i} >
  3$,
we define $D_i$ as the unit disk (i.e., the disk of radius 1) centered at $s'_i$; otherwise, $D_i$ is the unit disk centered at $s_i$.
Finally, we denote the unit disk centered at $x$ by $D_0$.

\begin{lemma}\label{l:packing}
  All the disks $D_i$, where $0\leq i \leq \kappa$, are pairwise
  internally disjoint.
\end{lemma}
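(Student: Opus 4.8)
The plan is to prove the equivalent statement that the centres of $D_0,D_1,\dots,D_\kappa$ are pairwise at distance at least~$2$; since each $D_i$ is a closed disk of radius~$1$, and two such disks are internally disjoint exactly when their centres are at distance at least~$2$, this is what the lemma asks for. Write $c_i$ for the centre of $D_i$, so that $c_0=x$, while $c_i=s_i$ if $\norm{s_i}\le 3$ and $c_i=s'_i$ if $\norm{s_i}>3$. In each case $c_i$ lies on the ray from $o$ through $s_i$, at distance $\norm{c_i}=\min\{\norm{s_i},3\}$ from~$o$; hence for $i,j\ge 1$ the angle $\angle c_ioc_j$ equals $\theta_{ij}:=\angle s_ios_j$, and the law of cosines can be applied in the triangle $oc_ic_j$ with that angle.

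I would first record three elementary facts. (a) For every $i\ge 1$ we have $\norm{s_i}\ge 1$: by~(\ref{eq:sx}), $\dist{s_i}{x}\ge 2$, hence $\norm{s_i}^2+2(s_i)_1+1\ge 4$, and combined with $(s_i)_1\le\norm{s_i}$ this gives $\norm{s_i}^2+2\norm{s_i}-3\ge 0$, that is, $\norm{s_i}\ge 1$. (b) $\dist{x}{c_i}\ge 2$ for every $i\ge 1$: if $\norm{s_i}\le 3$ this is~(\ref{eq:sx}); if $\norm{s_i}>3$ then, since $\norm{s'_i}=3$ and $(s'_i)_1\ge -3$, we get $\norm{c_i-x}^2=\norm{s'_i}^2+2(s'_i)_1+1=10+2(s'_i)_1\ge 4$. (c) For any two distinct indices $i,j$, inequality~(\ref{eq:ss}) implies $\dist{s_i}{s_j}\ge\dist{s_i}{u_i}$, and since $u_i\in\CD{x,y}$, the closed unit disk centred at~$o$, this yields
\[
  \dist{s_i}{s_j}\ \ge\ \dist{s_i}{u_i}\ \ge\ \norm{s_i}-\norm{u_i}\ \ge\ \norm{s_i}-1 .
\]

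Fact (b) settles every pair $\{0,i\}$, so fix a pair $\{i,j\}$ with $1\le i<j\le\kappa$. If $\norm{s_i}\le 3$ and $\norm{s_j}\le 3$, then $c_i=s_i$, $c_j=s_j$, and $\dist{c_i}{c_j}\ge 2$ is exactly~(\ref{eq:ss}). Otherwise, relabel $i,j$ as $p,q$ so that $a:=\norm{s_p}\ge\norm{s_q}=:b$; then $a>3$, $\norm{c_p}=3$, and $b':=\norm{c_q}=\min\{b,3\}$. Since $a-1>2>0$, fact~(c) applied to $s_p$ and the law of cosines in $os_ps_q$ give
\[
  a^2+b^2-2ab\cos\theta_{pq}=\dist{s_p}{s_q}^2\ \ge\ (a-1)^2,\qquad\text{hence}\qquad \cos\theta_{pq}\ \le\ \frac{b^2+2a-1}{2ab}.
\]
As the coefficient of $\cos\theta_{pq}$ in $\dist{c_p}{c_q}^2=9+{b'}^2-6b'\cos\theta_{pq}$ is $-6b'\le 0$, this upper bound on $\cos\theta_{pq}$ yields
\[
  \dist{c_p}{c_q}^2\ \ge\ 9+{b'}^2-\frac{3b'(b^2+2a-1)}{ab}.
\]
If $b\le 3$ then $b'=b$ and the right-hand side equals $4+(b^2-1)\,\dfrac{a-3}{a}$, which is at least~$4$ because $b\ge 1$ (fact (a)) and $a>3$. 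If $b>3$ then $b'=3$ and the right-hand side minus~$4$ equals $\dfrac{14b-18}{b}-\dfrac{9(b^2-1)}{ab}$, which is increasing in~$a$; since $a\ge b$ it is therefore at least its value at $a=b$, namely $\dfrac{14b-18}{b}-\dfrac{9(b^2-1)}{b^2}=\dfrac{(5b-3)(b-3)}{b^2}$, which is positive as $b>3$. In every case $\dist{c_i}{c_j}\ge 2$, which proves the lemma.

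The step needing real care is fact~(c) and its use in the last case. Knowing only $\dist{s_i}{s_j}\ge 2$ would \emph{not} suffice: two distant points $s_i,s_j$ nearly aligned with~$o$ can be exactly~$2$ apart while their projections to $\partial D(o,3)$ become arbitrarily close. What rescues the projection step is that~(\ref{eq:ss}) also forces $\dist{s_i}{s_j}\ge\dist{s_i}{u_i}\ge\norm{s_i}-1$, so the angular separation $\theta_{ij}$ must grow with $\norm{s_i}$ --- just fast enough to keep $6\sin(\theta_{ij}/2)\ge 2$. Everything else is routine planar trigonometry.
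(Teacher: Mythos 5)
Your proof is correct and follows essentially the same route as the paper: reduce the lemma to showing that the centres are pairwise at distance at least $2$, then combine~(\ref{eq:sx}), (\ref{eq:ss}) and the estimate $\dist{s_i}{u_i}\ge\norm{s_i}-1$ with the law of cosines for the radially projected points, ending at the same quadratic $5b^2-18b+9>0$ that the paper writes as $\tfrac59\norm{s_j}^2-2\norm{s_j}+1<0$. The only departure is that you absorb the mixed case $\norm{s_p}>3\ge\norm{s_q}$ into the same trigonometric computation, where the paper instead uses the containment $D(s'_j,2)\subseteq D(s_j,\norm{s_j}-1)$; both arguments are valid.
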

\begin{proof}
  We consider two disks $D_i$, $D_j$ $(0 \leq i, j \leq \kappa)$ and
  distinguish the possible cases with respect to the types of $D_i$ and $D_j$.

  Suppose first that either $i$ or $j$, for example $i$, equals 0. Thus the center of $D_i$ is $x$. If $\norm{s_j}\leq 3$,
  then $D_j$ is centered at $s_j$ and therefore is internally disjoint
  from $D_i$ by~(\ref{eq:sx}). On the other hand, if $\norm{s_j}>3$,
  then the center $s'_j$ of $D_j$ is on $\partial D(o,3)$ and $\dist{s'_j}{x}\geq
  2$, which makes $D_i$ and $D_j$ internally disjoint.

  Suppose next that $i>0$, $j>0$, and at least one of the two inequalities $\norm{s_i}\leq 3$ and $\norm{s_j}\leq 3$, for example the first one, holds. If $\norm{s_j}\leq 3$, then
  $D_i$ and $D_j$ are centered at $s_i$ and $s_j$, respectively, so
  they are internally disjoint by~(\ref{eq:ss}). Let us consider the
  case where $\norm{s_j} > 3$. By~(\ref{eq:ss}), $s_i$ is not contained
  in the interior of the disk $D(s_j,\dist{s_j}{u_j})$. Since $u_j$ is
  contained in $D(o,1)$, $s_i$ is not contained in the interior of
  $D(s_j,\norm{s_j}-1)$. Note that the latter disk contains the disk
  $D(s'_j,2)$.
	Consequently, $\dist{s_i}{s'_j} \geq 2$, and $D_i$ and $D_j$ are
  internally disjoint.

  Finally, suppose that $i,j>0$, $\norm{s_i} > 3$ and $\norm{s_j} > 3$. Without loss of generality,
  we may assume that $\norm{s_i} \geq \norm{s_j}$. We prove that $D_i$ and $D_j$ are
  internally disjoint by contradiction. Since in this case $D_i$ and $D_j$ are respectively centered at $s'_i$ and $s'_j$, if the disks are not disjoint we get that $\dist{s'_i}{s'_j} < 2$. Since $s'_i$
  and $s'_j$ lie on $\partial D(o,3)$, for the angle
  $\alpha=s_ios_j$ we have that $\sin(\alpha/2)<\tfrac13$. Thus we easily find that $\cos\alpha > \tfrac79$. By the law of cosines,
  \begin{equation*}
    \dist{s_i}{s_j}^2 < \norm{s_i}^2+\norm{s_j}^2 -
    \frac{14}9\norm{s_i}\norm{s_j}.
  \end{equation*}
  On the other hand, by~(\ref{eq:ss}) we know that $\dist{s_i}{s_j}
  \geq \dist{s_i}{u_i}$. By the triangle inequality, $\norm{s_i} =\dist{s_i}{o}\leq \dist{s_i}{u_i}+\dist{u_i}{o}$. Since $\dist{u_i}{o}\leq 1$, we obtain that $\dist{s_i}{u_i} \geq \norm{s_i}-1$. Combining $\dist{s_i}{s_j} \geq \norm{s_i}-1$ with the previous inequality, it gives
  \begin{equation*}
    \norm{s_i}\left(\frac{14}9\norm{s_j}-2\right) < \norm{s_j}^2 - 1.
  \end{equation*}
  Using the assumption that $\norm{s_i}\geq\norm{s_j}$, we find
  \begin{equation*}
    \label{eq:z}
    \frac59\norm{s_j}^2-2\norm{s_j}+1 < 0.
  \end{equation*}
  To satisfy this inequality, $\norm{s_j}$ has to be contained in the interval
  $(\frac35,3)$, contradicting the assumption that
  $\norm{s_j}>3$. This completes the proof.
\end{proof}

The center of each of the unit disks $D_i$ ($0 \leq i \leq \kappa$) lies
within distance 3 of the origin, so by Lemma~\ref{l:packing},
$\Setx{D_0,\dots,D_{\kappa}}$ is a unit disk packing inside the circle
$\partial D(o,4)$. By a result of Fodor~\cite{fodor}, the smallest radius $R$ of a
circle admitting a packing of twelve unit disks satisfies $R >
4.029$. Since the radius of $\partial D(o,4)$ is $4$, we obtain that $\Setx{D_0,\dots,D_{\kappa}}$ is a unit disk packing of at most eleven disks, i.e., $\kappa +1 \leq 11$. Therefore, $\kappa \leq 10$, which
finishes the proof of Theorem~\ref{main-theorem}.

\section{Concluding remarks}
\label{sec:remarks}

In this section, we discuss a possible way to further improve upon Theorem~\ref{main-theorem}, as well as constructions showing lower bounds (both for the specific method and in general).

We start by making further observations about the minimal cycle $m$. For each point $u_i$, we denote by $t_i$ the point in $S$ succeeding $u_i$ in the traversal of $m$ starting from the ``directed" edge $\overrightarrow{xy}$ and finishing at $x$. Notice that possibly $t_{\kappa}=x$, or $t_i=s_{i+1}$ for some $1\leq i\leq \kappa -1$. As shown in~\cite{kDG-Abe}, by traversing $m$ in the reverse order and arguing as in~(\ref{eq:sx}), we obtain that, for $1\leq i\leq \kappa$,
\begin{equation}
  \dist{t_i}y \geq \max\left\{\dist{t_i}{u_i},2\right\}.\label{eq:ty}
\end{equation}

Additionally, we have an inequality involving distances between points of the form $t_i$ that is analogous to~(\ref{eq:ss}) (see~\cite{kDG-Abe}). For $1\leq i<j\leq \kappa$, we have:
\begin{equation}
  \dist{t_i}{t_j} \geq \max\left\{\dist{t_i}{u_i},\dist{t_j}{u_j},2\right\}.\label{eq:tt}
\end{equation}

We can also derive some inequalities involving distances between points of the form $s_i$ and points of the form $t_i$. First, for $1\leq i\leq \kappa$, we show:
\begin{equation}
  \dist{s_i}{t_i} \geq \max\left\{\dist{s_i}{u_i},\dist{t_i}{u_i},2\right\}.\label{eq:st1}
\end{equation}

If the inequality was not satisfied, we would have that $m \succ m'''$, where $m'''$ is the Hamiltonian cycle obtained by removing edges $s_iu_i$, $t_iu_i$ and $xy$ from $m$, and adding edges $s_it_i$, $u_ix$ and $u_iy$. 

Next, for $1\leq i<j\leq \kappa$, we can easily prove:
\begin{equation}
  \dist{s_i}{t_j} \geq \max\left\{\dist{s_i}{u_i},\dist{t_j}{u_j},2\right\}.\label{eq:st2}
\end{equation}

In this case, the Hamiltonian cycle used to prove the inequality is the one obtained by removing edges $s_iu_i$, $t_ju_j$ and $xy$ from $m$, and adding edges $s_it_j$, $u_ix$ and $u_jy$. 

For every point $u_i$, we define $u^x_i$ and $u^y_i$, respectively, as the $x$- and $y$-coordinates of $u_i$. In the same way, we define variables for the points of the form $s_i$ and $t_i$. Then we set $\mathcal{V}=\left\{u^x_i, u^y_i,s^x_i, s^y_i,t^x_i, t^y_i\,|\, 1\leq i\leq \kappa \right\}$. Since the points $u_i$ lie in $\CD{x,y}$, we have 

\begin{equation}
  (u^x_i)^2+(u^y_i)^2 \leq 1.\label{eq:u}
\end{equation}

Inequalities~(\ref{eq:sx})-(\ref{eq:u}) can be expressed as quadratic inequalities with variables in $\mathcal{V}$. Therefore, it might be possible to improve Theorem~\ref{main-theorem} by answering the following question: What is the maximum value of $\kappa$ such that inequalities~(\ref{eq:sx})-(\ref{eq:u}) define a non-empty region of $\RR^{6\kappa}$? Unfortunately, some of the constraints in the program are not convex, and our attempts to answer this question by using a quadratic programming solver have so far been unsuccessful. 

On the other hand, Figure~\ref{fig:six-triples} shows an example of a Hamiltonian cycle with an edge not in $\VGG{S}$, and which is minimal in $\mathcal{H}$ (we prove this in the next paragraph). This proves that the system of inequalities~(\ref{eq:sx})-(\ref{eq:u}) is feasible for $\kappa=6$. We conclude that, with this particular approach (what is the smallest value of $k$ such that all edges of any minimal Hamiltonian cycle belong to $\kGG{S}$?), the best result that one can possibly prove is that 6-Gabriel graphs are Hamiltonian. (In fact, Biniaz \textit{et al.}~\cite{example-Biniaz} further improved this by constructing a point set $S$ whose unique minimal Hamiltonian cycle is not contained in $\VIIGG{S}$, implying that the best possible result is that $\VIIIGG{S}$ is Hamiltonian.)

\begin{figure}[htb]
    \centering
    \includegraphics[scale=0.7]{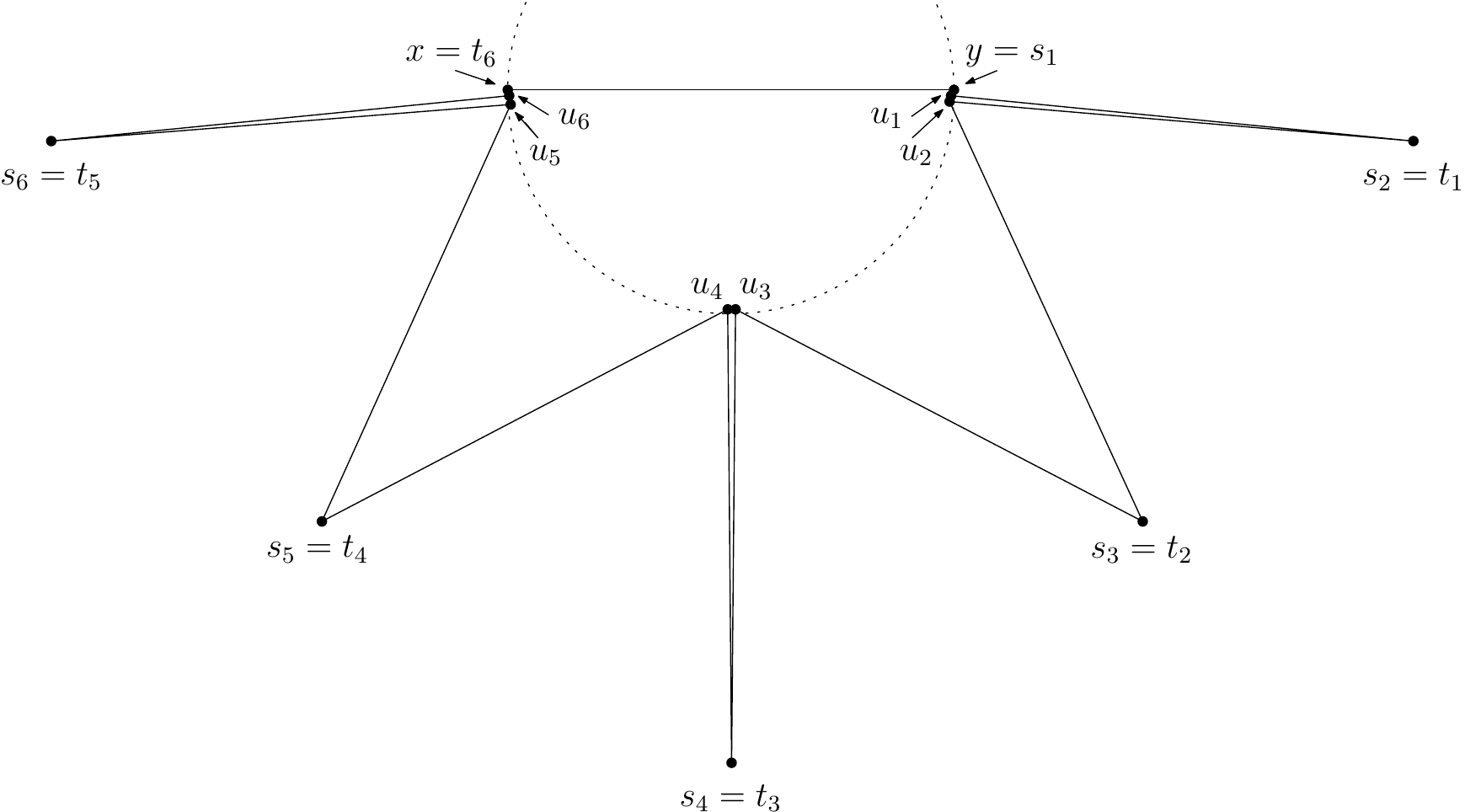}
    \caption{Minimal Hamiltonian cycle where one of the edges does not belong to $\VGG{S}$.}
    \label{fig:six-triples}
\end{figure}

In order to prove that the edges in Figure~\ref{fig:six-triples}  form a Hamiltonian cycle $h$ that is minimal, we point out that points have been arranged so that points $s_2,s_3,\ldots,s_6$ are connected to their two closest points in the point set. Now, $s_2u_1$ and $s_2u_2$ are the longest edges in the cycle, together with $s_6u_5$ and $s_6u_6$. Since $u_1$ and $u_2$ are the two closest points to $s_2$, any Hamiltonian cycle $h'$ where $s_2$ is not connected to $u_1$ or $u_2$ satisfies $ds(h')>ds(h)$. Thus, if there exists a cycle $h''$ such that $ds(h'')<ds(h)$, then $h''$ contains $s_2u_1$ and $s_2u_2$, and analogously $s_6u_5$ and $s_6u_6$. Similarly, we first find that edges $s_3u_2$, $s_3u_3$, $s_5u_4$ and $s_5u_5$ are also contained in $h''$, and then that $h''$ additionally contains $s_4u_3$ and $s_4u_4$. To conclude, it is easy to see that $h''$ contains $xy$, $xu_6$ and $yu_1$, obtaining the contradiction $h''=h$.

Finally, we give a lower bound for the minimum value of $k$ such that $k$-Gabriel graphs are always Hamiltonian. To the best of our knowledge, the only bound that was known is~1, which follows trivially from the fact that $0$-Delaunay graphs do not necessarily contain a Hamiltonian cycle~\cite{nonHamDT}. In the following proposition, we slightly improve this bound to 2:

\begin{proposition}
There exist point sets $S$ such that $\IGG{S}$ is not Hamiltonian. 
\end{proposition}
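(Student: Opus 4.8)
The plan is to exhibit a small explicit point set $S$ and to check that $\IGG{S}$ fails an elementary necessary condition for Hamiltonicity: if a graph $G$ is Hamiltonian, then for every vertex set $T$ the graph $G-T$ has at most $|T|$ components (deleting $|T|$ vertices from a cycle leaves at most $|T|$ paths). I would construct $S$ so that $\IGG{S}$ has a $2$-element set $T$ with $\IGG{S}-T$ having three components. Note that one cannot instead try to make $\IGG{S}$ disconnected, since Gabriel graphs are always connected (they contain a Euclidean minimum spanning tree); a genuine cut-set argument is unavoidable.

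Concretely, I would take $S=\{q_1,q_2,a_0,a_1,a_2\}$, where $a_0,a_1,a_2$ are the vertices of a large equilateral triangle centred at the origin $o$ and $q_1,q_2$ are two points placed at distance $\varepsilon$ from $o$ on opposite sides of it, with $\varepsilon$ much smaller than the circumradius of the triangle. The claim is that $\IGG{S}$ consists of exactly the edge $q_1q_2$ together with the six edges $q_ia_j$ $(i\in\{1,2\},\ j\in\{0,1,2\})$; in other words, $\IGG{S}$ is $K_{2,3}$ with parts $\{q_1,q_2\}$ and $\{a_0,a_1,a_2\}$, plus the edge $q_1q_2$. Granting this, $\IGG{S}-\{q_1,q_2\}$ is the edgeless graph on $\{a_0,a_1,a_2\}$, which has three components while $|\{q_1,q_2\}|=2$, so $\IGG{S}$ is not Hamiltonian. (Equivalently: each $a_j$ has degree $2$, so a Hamiltonian cycle would have to contain the three internally disjoint paths $q_1a_0q_2$, $q_1a_1q_2$ and $q_1a_2q_2$, forcing $q_1$ to have degree at least $3$ — impossible.)

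What remains — and this is the only real work — is to justify the description of $\IGG{S}$. The edge $q_1q_2$ is present because $\CD{q_1,q_2}$ is a tiny disk around $o$ that misses the far-away points $a_j$. Each edge $q_ia_j$ is present because $\CD{q_i,a_j}$ is a disk with one boundary point near $o$ and the other at a triangle vertex; a short computation shows it avoids the other two triangle vertices, so it meets $S$ in at most three points, which is exactly the threshold $|\CD{q_i,a_j}\cap S|\leq k+2=3$ required for an edge of $\IGG{S}$. Conversely, no edge $a_ia_j$ occurs, and the crucial geometric fact here is that each of the three diameter-disks $\CD{a_0,a_1}$, $\CD{a_0,a_2}$, $\CD{a_1,a_2}$ contains $o$ in its interior: for an equilateral triangle of circumradius $R$ one computes that such a disk has radius $\tfrac{\sqrt3}{2}R$ and centre at distance $\tfrac12 R<\tfrac{\sqrt3}{2}R$ from $o$. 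Hence, for $\varepsilon$ small enough, each of these disks contains both $q_1$ and $q_2$, so $|\CD{a_i,a_j}\cap S|\geq 4>k+2$ and $a_ia_j\notin\IGG{S}$. The delicate point, and the main obstacle, is precisely to balance the two opposing requirements — the ``vertex-to-centre'' disks must stay at or below the threshold $3$ while the ``vertex-to-vertex'' disks must exceed it — which is what dictates taking $q_1,q_2$ close to $o$ and the triangle large; once the relevant inequalities are verified (and one checks that no unexpected incidences arise), the proof is complete.
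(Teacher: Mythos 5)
Your proposal is correct and follows the same approach as the paper, which likewise proves the proposition by exhibiting a small explicit point set (shown only in a figure) whose $1$-Gabriel graph visibly fails to be Hamiltonian. Your version actually supplies the details the paper leaves to the picture: the $K_{2,3}$-plus-an-edge structure, the cut-set/degree obstruction, and the distance computations ($\CD{a_i,a_j}$ has radius $\tfrac{\sqrt3}{2}R$ with centre at distance $\tfrac12 R$ from $o$, hence contains both $q_1,q_2$ for small $\varepsilon$) all check out.
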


\begin{proof}
A very simple example of this fact is shown in Figure~\ref{fig:nonHam1GG}. We note that it is not difficult to produce examples involving larger point sets.
\end{proof}

\begin{figure}[htb]
    \centering
    \includegraphics[scale=0.65]{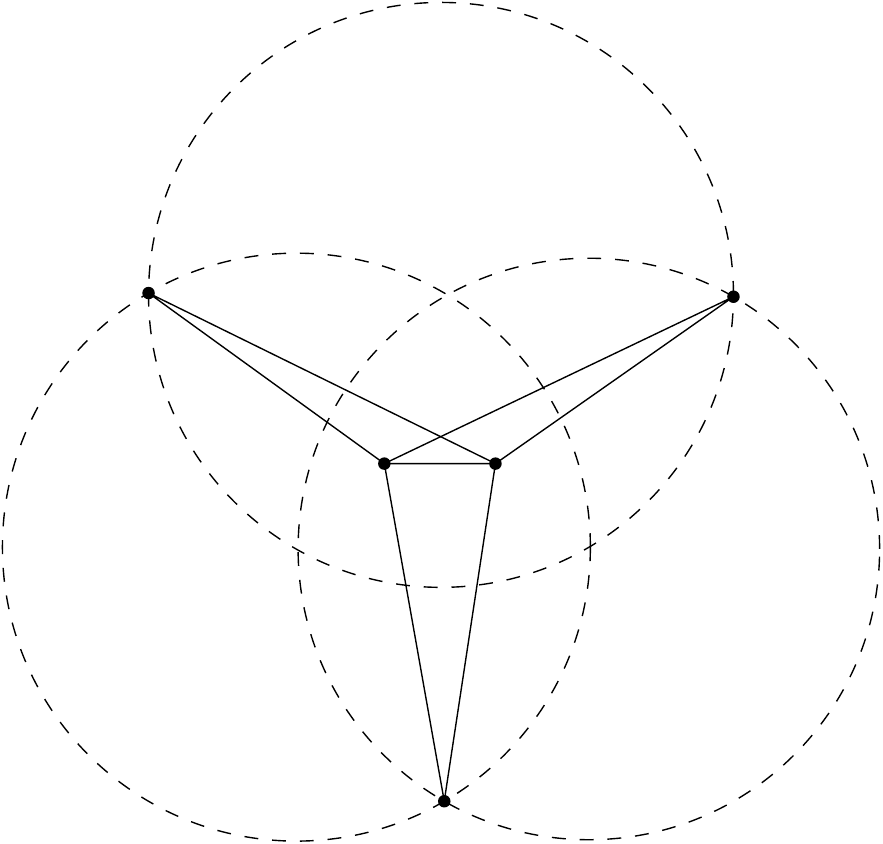}
    \caption{A point set $S$ and its 1-Gabriel graph, which is not Hamiltonian. The dashed circles show that the edges connecting the two points on the circles do not belong to $\IGG{S}$.}
    \label{fig:nonHam1GG}
\end{figure}

\paragraph{Acknowledgments} We would like to thank an anonymous reviewer for suggesting the example in Figure~\ref{fig:nonHam1GG}, which is smaller than our original construction.



\bibliographystyle{elsarticle-harv}
\bibliography{refs}









\end{document}